\newtheorem{theorem}{Theorem}
\newtheorem{definition}{Definition}
\newtheorem{remark}{Remark}
\title{On Quantile curves based bivariate reliability concepts}
\author{{ Sreelakshmi N }\\
\small{\it Indian Statistical Institute, Chennai, India }}
\date{}
\begin{document}
\doublespace
\maketitle
\begin{abstract}
We extend the univariate quantile based reliability concepts to the bivariate case using quantile curves. We propose quantile curves based bivariate hazard rate and bivariate mean residual life function and  establish a relationship between them. We study the uniqueness properties of these concepts to determine the underlying quantile curve. We also study the quantile curves based reliability concepts in reverse time. \\

\end{abstract}

\section{Introduction}
Several generalizations of univariate reliability concepts to bivariate as well as multivariate setup is available in the literature. We use quantile curves to define bivariate hazard rate and bivariate mean residual life function.   Let  $X$ be random variable with  distribution function $F(.)$. The quantile function is defined as
\begin{equation*}
  Q_{X}(u)= \inf \{x:F(x) \ge u\};\,\,0 \le u \le 1.
\end{equation*}
When $F$ is absolutely continuous $Q_{X}(u)=F^{-1}(u)$.
The properties of quantile functions are well explained in Gilchrist (2000). It is not possible to extend the notion of univariate quantile functions to multivariate set up in a unique manner. Such extensions can result some issues like non existence of natural ordering in $n$ dimensions and non-parametric estimation of multivariate quantiles.
\par Several definitions are avaialble for multivaraite quantiles such as median balls (Av{\'e}rous and Meste, 1997), zonoid quantile (Koshevoy and Mosler, 1997), the concept of half planes (Nola  (1992) and Massé and Theodorescu (1994)) and the notion of depth function (Tukey, 1977).

\par   Fernández-Ponce and Suarez-Llorens (2003)   fixed the problem related to the ordering in $n$ dimensions as well as the choice of shape of central region in the case of no symmetric distributions. They defined the multivariate quantile as a set of points known as quantile curves. Quantile curves accumalate the same probability for a fixed orthant.

\par Let $\textbf{X}=\left( {X ,Y } \right)$ be absolutely continuous  bivariate random vector and $\textbf{x}=\left( {x ,y } \right)$ be a point in $R^2$. Let $F(.)$ and $G(.)$ be the marginal distribution functions of $X$ and $Y$, respectively. Also let $F_{\varepsilon}(\textbf{x})$ be the joint distribution function of $\textbf{X}$.  Denote the four directions in the two dimensional plane as $\varepsilon  = \left( {\varepsilon _1 ,\varepsilon _2 } \right)$ with $\varepsilon _i  \in \left\{ { - 1,1} \right\};i = 1,2.$  We use the same notation  $-$ and $+ $ given in Belzunce et al. (2007) to represent  $-1$ and $+1$, respectively. We have,
$$F_{\varepsilon _{ -  - } } \left( \textbf{x} \right) = P\left[ {X  \le x ,Y  \le y } \right].$$
 The $p-{th}$ bivariate quantile curve for the direction $\varepsilon$ denoted by $Q_\textbf{X} (p,\varepsilon )$ defined as
$$Q_\textbf{X} (p,\varepsilon ) = \left\{ {\left( \textbf{x} \right) \in R^2 :F_\varepsilon  \left( \textbf{x} \right) = p} \right\},$$
where $0 \le p \le 1$.
\par As an example, for the random vector $\textbf{X}$ having  bivariate Pareto distribution with independent components following  univariate Pareto distribution with scale parameter $k$ and shape parameter $\alpha$, it is possible to plot four quantile curves in four directions for each $p \in [0,1]$.
For the direction $\varepsilon _{ +  + } $ we obatin the quantile curve by plotting $xy=k^{2}p^{-1/\alpha}$ for $0 \le p \le 1.$ {\bf please plot the figure}
\par Suppose $F(x)=u$ and $G(y)=v$, then  $x  = F^{ - 1} (u) = Q_{X}(u)$ and $y  = G^{ - 1} (v) = Q_{Y}(v)$.
For the bivariate random vector $\textbf{X}$, Belzunce et al. (2007) showed that the quantile curves can be expressed by means of quantiles of the conditional distributions of  $Y|X \le x$ and $Y|X \ge x$ given as
\begin{eqnarray}\label{qcr}
 Q_{\textbf{X}} (p,\varepsilon _{ - \, - } ) \to \left\{ {\left( {Q_{X} (u),Q_{Y|X \le Q_{X} (u)} (\frac{p}{u})} \right):u > p} \right\} \nonumber\\
 Q_{\textbf{X}} (p,\varepsilon _{ + \, - } ) \to \left\{ {\left( {Q_{X} (u),Q_{Y|X \ge Q_{X} (u)} (\frac{p}{{1 - u}})} \right):u < 1 - p} \right\} \\
 Q_{\textbf{X}} (p,\varepsilon _{ - \, + } ) \to \left\{ {\left( {Q_{X} (u),Q_{Y|X \le Q_{X} (u)} (1 - \frac{p}{u})} \right):u > p} \right\} \nonumber\\
 Q_{\textbf{X}} (p,\varepsilon _{ + \, + } ) \to \left\{ {\left( {Q_{X} (u),Q_{Y|X \ge Q_{X} (u)} (1 - \frac{p}{{1 - u}})} \right):u < 1 - p} \right\}. \nonumber
 \end{eqnarray}
where $0 \le u,p \le 1$.
\par  Nair et al. (2013) studied the reliability concepts in a univariate quantile frame work. Sreelakshmi (2017) introduced the bivariate reliability concepts using  dependence structure and studied properties as well as characterizations  based on the relationship between copula based bivariate hazard rate and bivariate mean residual life. For each direction, the probability $F_\varepsilon  \left( . \right)$ is written in terms of copula and thus lead to the construction of bivariate reliability concepts. This approach is useful to study the reliability properties of copula based models. In this paper, we use the idea developed by Belzunce et al. (2007) given in (\ref{qcr}) to represent the bivariate reliability concepts using univariate quantiles as well as quantiles of conditional distribution. It is interesting to note that this way of expressing the level curve does   not require the concept of copula and it requires only quantiles of conditional distributions.
\par  The article is organised as follows. Section 2 gives the definitions on quantile curves based bivariate hazard rate and bivariate mean residual life function. The uniqueness properties and the relationships between these quantile curves based reliability concepts  are derived in Section 3. In Section 4,  we introduce  the corresponding reliability notions in reversed time setup. 
In Section 5 we give the conclusion of our study.

\section{Quantile curves based bivariate reliability concepts}
\par To introduce the bivariate reliability concepts based on quantile curves, we need some notations which are explained below.
Let $\textbf{X}=(X, Y)$ be the bivariate random vector with univariate marginal distribution function $F(.)$ and $G(.)$. Suppose that $F(x)=u$ so that $x=F^{-1}(u)=Q_{X}(u)$ and  $G(y)=v$ so that $y=G^{-1}(v)=Q_{Y}(v)$. Let $Y_{X}=(Y|X \le x)$ and $G_{1}(Y_X)=P(Y \le y|X \le x)=P_{X}$. Therefore, $Y_{X}=G_{1}^{-1}(P_{X})=\phi(P_{X})$. Also $X_{Y}=X|Y \le y$ and $G_{2}(X_Y)=P(X \le x|Y \le y)=P_{Y}$ which yields $X_{Y}=G_{2}^{-1}(P_{Y})=\psi(P_{Y})$.
\par Next, we propose the definitions for bivariate hazard rate as well as bivariate mean residual life based on quantile curves for the direction $\varepsilon _{ -  - } $. From (1), the quantile curve in the direction $\varepsilon _{ -  - } $, $Q_{\textbf{X}}(p,\varepsilon _{ -  - })$ is a vector containing $u-{th}$ quantile corresponding to the random variable $X$, $Q_{X}(u)$ and $Q_{Y|X \le Q_{X}(u)}(p/u)$ which is the quantile function corresponds to the random variable $Y_{X}$ and  a point that accumulates probability $\frac{p}{u}$ to the left tail and $1-\frac{p}{u}$ to the right tail provided $u>p$. Based on the above vector of two quantile functions, we look into the definitions of quantile curves based bivariate reliability concepts.

\begin{definition}
The bivariate hazard rate of $\textbf{X}$ in the direction $\varepsilon _{ -  - } $ in terms of quantile curves can be defined as the vector
\begin{equation}
\underset{\raise0.3em\hbox{$\smash{\scriptscriptstyle-}$}}{h} _{\varepsilon _{ -  - } } (u,P_{X}) = \left( {h_1 (u),h_2 (P_{X})} \right)
\end{equation}
where
\begin{equation}
h_1 (u)=\frac{1}{(1-u)Q'_{X}(u)}
\end{equation}
and
\begin{equation}
h_2 (P_{X})=\frac{1}{(1-P_{X})\phi'(P_{X})},
\end{equation}
where prime denotes the differentiation with respect to $u$.
  \end{definition}The quantity $h_2 (P_{X})$ can be interpreted as the condional probbaility of the failure of second unit in the next small interval of time given the survival of it at $100(1-P_{X}) \%$ point of distribution and that of first  unit at $100(1-u) \%$ point of distribution.
\par For the last one decade, researchers have shown much interest in studying  the remaining lifetime of a unit given it has survived a particular point of time $t$. For more on bivariate mean residual life based on distribution functions, one can refer to  Nair and Nair (1989) and Kulkarni and Rattihalli (1996).
\begin{definition}The quantile curves based bivariate mean residual life function  of $\textbf{X}$ in  the direction $\varepsilon _{ -  - } $ is given by
\begin{equation}
\underset{\raise0.3em\hbox{$\smash{\scriptscriptstyle-}$}}{m} _{\varepsilon _{ -  - } } (u,P_{X}) = \left( {m_1 (u),m_2 (P_{X})} \right),
\end{equation}
where
\begin{equation}
m_1 (u)= \frac{1}{{1 - u}}\int\limits_u^1 {Q_{X} (z)dz - Q_{X} (u)}
\end{equation}
and
\begin{equation}
m_2 (P_{X})= \frac{1}{{1 - P_{X}}}\int\limits_{P_{X}}^{1} {\phi (z)dz - \phi (P_{X})}.
\end{equation}
\end{definition}
\begin{remark}
If we interchange $X$ and $Y$, we obtain another vector for  the bivariate hazard rate in the direction $\varepsilon _{ -  - } $ and is given by
\begin{equation}
\underset{\raise0.3em\hbox{$\smash{\scriptscriptstyle-}$}}{H} _{\varepsilon _{ -  - } } (v,P_{Y}) = \left( {H_1 (v),H_2 (P_{Y})} \right)
\end{equation}
where
$$H_1 (v)=\frac{1}{(1-v)Q'_{Y}(v)}$$
and
$$H_2 (P_{Y})=\frac{1}{(1-P_{Y})\psi'(P_{Y})}.$$
And the bivariate mean residual life is given by
\begin{equation}
\underset{\raise0.3em\hbox{$\smash{\scriptscriptstyle-}$}}{M} _{\varepsilon _{ -  - } } (v,P_{Y}) = \left( {M_1 (v),M_2 (P_{Y})} \right),
\end{equation}
where
$$M_1 (v)= \frac{1}{{1 - v}}\int\limits_v^1 {Q_{Y} (z)dz - Q_{Y} (v)}$$
and
$$M_2 (P_{Y})= \frac{1}{{1 - P_{Y}}}\int\limits_{P_{Y}}^{1} {\psi (z)dz - \psi (P_{Y})}.$$
\end{remark}
\begin{remark}
If $X$ and $Y$ are independent, $\underset{\raise0.3em\hbox{$\smash{\scriptscriptstyle-}$}}{h} _{\varepsilon _{ -  - } } (u,P_{X})$ reduces to vectors of univariate hazard quantile functions of the random variables $X$ and $Y$ respectively.  That is, when $X$ and $Y$ are independent,
$$\underset{\raise0.3em\hbox{$\smash{\scriptscriptstyle-}$}}{h} _{\varepsilon _{ -  - } } (u,P_{X})=\left(h_{1}(u),H_1({v})\right).$$
Also, when $X$ and $Y$ are independent,
$$\underset{\raise0.3em\hbox{$\smash{\scriptscriptstyle-}$}}{m} _{\varepsilon _{ -  - } } (u,P_{X}) = \left( {m_1 (u),M_1 (v)} \right).$$
\end{remark}
\section{Properties and characterizations}
In this section we study the uniqueness property of the reliability concepts derived in Section 2. We establish a  relationship between the bivariate hazard rate and bivariate mean residual life in the quantile curves based approach.
\begin{theorem}
For the bivariate random vector $\textbf{X}$ , quantile curve based bivariate hazard rate defined by (2) determines the underlying quantile curve $Q_{\textbf{X}} (p,\varepsilon _{ - \, - } )$  uniquely. Here $Q_{\textbf{X}} (p,\varepsilon _{ - \, - } )$  is obtained  as
$$ Q_{\textbf{X}} (p,\varepsilon _{ - \, - } ) = \left( Q_{X} (u),\phi(P_{X}) \right),$$
where
$$Q_{X}(u)=\int_{0}^{u}{\frac{dz}{(1-z)h_{1}(z)}}$$
and
$$\phi(P_{X})=\int_{0}^{P_{X}}{\frac{dz}{(1-z)h_{2}(z)}}.$$
\end{theorem}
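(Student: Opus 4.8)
The plan is to invert the two defining relations (3) and (4) so as to recover the component functions $Q_{X}$ and $\phi$, and then to appeal to uniqueness of the solution of the resulting first-order differential equations. Since the quantile curve in the direction $\varepsilon_{--}$ is, by (1), precisely the pair $\left(Q_{X}(u),\phi(P_{X})\right)$ with $P_{X}=p/u$, recovering these two functions recovers the whole curve.

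First I would treat the first component. Solving (3) for the derivative gives
\begin{equation*}
Q'_{X}(u)=\frac{1}{(1-u)h_{1}(u)}.
\end{equation*}
Because $X$ is a non-negative lifetime variable, the lower endpoint of its support is $0$, so that $Q_{X}(0)=0$. Integrating the last display over $[0,u]$ and using the fundamental theorem of calculus yields
\begin{equation*}
Q_{X}(u)=\int_{0}^{u}\frac{dz}{(1-z)h_{1}(z)},
\end{equation*}
which is the asserted formula for $Q_{X}$.

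Second, the same argument applies verbatim to the second component. Rearranging (4) gives $\phi'(P_{X})=\left[(1-P_{X})h_{2}(P_{X})\right]^{-1}$, and integrating from $0$ to $P_{X}$ with the boundary value $\phi(0)=0$ (valid since the conditional variable $Y_{X}$ is also non-negative) produces
\begin{equation*}
\phi(P_{X})=\int_{0}^{P_{X}}\frac{dz}{(1-z)h_{2}(z)}.
\end{equation*}
Uniqueness is then immediate: each relation is a first-order differential equation whose right-hand side is fixed by the corresponding hazard component, and prescribing the value at the origin fixes the constant of integration, so the solution is unique. Hence $h_{1}$ determines $Q_{X}$, $h_{2}$ determines $\phi$, and therefore the hazard vector $\underset{\raise0.3em\hbox{$\smash{\scriptscriptstyle-}$}}{h}_{\varepsilon_{--}}(u,P_{X})$ determines $Q_{\textbf{X}}(p,\varepsilon_{--})$ uniquely.

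The only genuinely delicate point is the justification of the two boundary conditions $Q_{X}(0)=0$ and $\phi(0)=0$; everything else is a routine use of the fundamental theorem of calculus together with the uniqueness theorem for these elementary differential equations. I would therefore record the non-negativity of $X$ and of the conditional variable $Y_{X}$ (support starting at the origin) explicitly as the standing reliability assumption that makes the integral representations exact rather than correct only up to an additive constant.
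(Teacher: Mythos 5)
Your proposal is correct and follows essentially the same route as the paper: invert the defining relations to get $Q'_{X}(u)=\left[(1-u)h_{1}(u)\right]^{-1}$ and $\phi'(P_{X})=\left[(1-P_{X})h_{2}(P_{X})\right]^{-1}$, then integrate from the origin. The only difference is that you make explicit the boundary conditions $Q_{X}(0)=0$ and $\phi(0)=0$ (via the standing assumption that the lifetimes have support starting at $0$), which the paper uses implicitly when it integrates from $0$ without an additive constant --- a worthwhile clarification, not a different argument.
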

\begin{proof}
 Using the definition of  hazard quantile function given in (4), we obtain
$$\phi'(P_{X}) =\frac{1}{(1-P_{X}) h_2 (P_{X})}.$$
Integrating the above equation from $0$  to $ P_{X}$ yields,
 $$\phi(P_{X})=\int_{0}^{P_{X}}{\frac{dz}{(1-z)h_{2}(z)}}.$$
 The expression for $Q_{X}(u)$ can be obtained from $h_{1}(u)$ on similar lines.
\end{proof}
\begin{theorem}
 For the bivariate random vector $\textbf{X}$, the quantile curve  can be expressed uniquely in terms of $\underset{\raise0.3em\hbox{$\smash{\scriptscriptstyle-}$}}{m} _{\varepsilon _{ -  - } } (u,P_{X})$ through
$$ Q_{\textbf{X}} (p,\varepsilon _{ - \, - } ) = \left( Q_{X} (u),\phi(P_{X}) \right),$$
where
\begin{equation}\label{eq10}
  Q_{X}(u)=\mu_{X}-m_{1}(u)+\int_{0}^{u}\frac{m_1(z)}{1-z}dz
\end{equation}
 and
\begin{equation}\label{eq11}
\phi(P_{X})=\mu_{Y_{X}}-m_{2}(P_{X})+\int_{0}^{P_{X}}\frac{m_2(z)}{1-z}dz
\end{equation}
and $\mu_{X}$ and $\mu_{Y_{X}}$ are the means corresponding to the random variables $X$ and $Y_{X}$, respectively.
\end{theorem}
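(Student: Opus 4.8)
The plan is to invert the defining integral equation (6) for $m_1(u)$ so as to recover $Q_X(u)$, and then to run the identical argument on the pair $(m_2,\phi)$ using (7). Write $R(u)=\int_u^1 Q_X(z)\,dz$, so that $R'(u)=-Q_X(u)$, and rearrange (6) into $m_1(u)=\frac{R(u)}{1-u}-Q_X(u)=R'(u)+\frac{R(u)}{1-u}$. Hence $R$ solves the first-order linear ODE $R'(u)+\frac{1}{1-u}R(u)=m_1(u)$, which is the key observation: recovering $Q_X$ amounts to solving this equation.

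First I would solve the ODE with the integrating factor $\exp\left(\int\frac{dz}{1-z}\right)=(1-u)^{-1}$. Multiplying through, the left-hand side collapses to the exact derivative $\frac{d}{du}\!\left[\frac{R(u)}{1-u}\right]=\frac{m_1(u)}{1-u}$. Integrating from $0$ to $u$ gives $\frac{R(u)}{1-u}=R(0)+\int_0^u\frac{m_1(z)}{1-z}\,dz$, and the boundary term is evaluated via $R(0)=\int_0^1 Q_X(z)\,dz=\mu_X$. Finally, substituting $\frac{R(u)}{1-u}=Q_X(u)+m_1(u)$, which is just (6) rearranged, yields (\ref{eq10}). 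Replacing $Q_X,u,\mu_X$ by $\phi,P_X,\mu_{Y_X}$ and using (7) in place of (6) reproduces (\ref{eq11}) verbatim, since $\phi$ is itself a genuine conditional quantile function and $m_2$ is its mean residual life.

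For the uniqueness claim I would stress that, once the initial value $R(0)=\mu_X$ is fixed, the linear ODE above has a unique solution, so $m_1$ together with $\mu_X$ pins down $R$ and hence $Q_X$ without ambiguity; the same holds for $\phi$. It is worth noting why the mean cannot be dropped: a rigid shift $Q_X\mapsto Q_X+c$ leaves $m_1$ unchanged, because the added constant cancels in (6), while changing $\mu_X$ by $c$. Thus $m_1$ alone determines $Q_X$ only up to an additive constant, and prescribing $\mu_X$ removes exactly this one-parameter freedom, which is precisely what the presence of $\mu_X$ and $\mu_{Y_X}$ in the formulas reflects.

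The main obstacle is the correct treatment of the boundary, namely recognizing that the left-endpoint value is $R(0)=\mu_X$ and ensuring that the integrals involved are finite; this relies on the standing assumption that the marginal and the relevant conditional means exist. Everything else is the routine solution of a linear first-order ODE and a direct back-substitution.
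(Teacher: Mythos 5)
Your proof is correct, but it takes a genuinely different route from the paper's. The paper first rewrites (7) by integration by parts as $m_2(P_X)=\frac{1}{1-P_X}\int_{P_X}^{1}(1-z)\phi'(z)\,dz$, differentiates this identity to isolate $\phi'(P_X)=\frac{m_2(P_X)}{1-P_X}-m_2'(P_X)$, and then integrates from $0$ to $P_X$ using the boundary value $m_2(0)=\mu_{Y_X}$, with the pair $(m_1,Q_X)$ handled ``on similar lines.'' You instead regard (6) as a first-order linear ODE for the tail integral $R(u)=\int_u^1 Q_X(z)\,dz$, solve it by the integrating factor $(1-u)^{-1}$ with initial value $R(0)=\mu_X$, and back-substitute. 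The two arguments both invert the same defining relation by elementary calculus, but yours has two concrete advantages: it never differentiates the mean residual function (the paper's route requires $m_2'$ to exist), and its boundary treatment rests on the universal identity $\int_0^1 Q_X(z)\,dz=\mu_X$ rather than on the lifetime convention $\phi(0)=0$ that is implicit in the paper's claim $m_2(0)=\mu_{Y_X}$. The paper's method, in exchange, is shorter and produces the explicit inversion formula for $\phi'$ in one step. Your closing observation about shift invariance---that $m_1$ is unchanged under $Q_X\mapsto Q_X+c$, so the mean supplies exactly the missing additive constant---is a useful clarification of what ``uniquely'' means in the statement; note only that under the paper's implicit convention that supports begin at $0$, the means are themselves recoverable as $m_1(0)$ and $m_2(0)$, so the theorem's claim that $\underset{\raise0.3em\hbox{$\smash{\scriptscriptstyle-}$}}{m}_{\varepsilon_{--}}(u,P_X)$ alone determines the quantile curve remains consistent with your remark.
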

\begin{proof}
We rewrite the definition of $ m_2 (P_{X})$ as
\begin{equation}
m_2 (P_{X})= \frac{1}{{1 - P_{X}}}\int\limits_{P_{X}}^{1} {(1 - z)\phi'(z)dz}.
\end{equation}
Differentiating the above equation with respect to $P_{X}$, we have
$$\phi'(P_{X})=\frac{m_2 (P_{X})}{1-P_{X}}-m_{2}' (P_{X}).$$
On integrating the above equation  from $0$ to $P_X$, we get (\ref{eq11}). Note that $m_{2}(0)=\mu_{Y_{X}}$.  On similar lines, from (6), we can arrive at the form of $Q_{X}(u)$ given in equation (\ref{eq10}).
\end{proof}

\begin{remark}[]
The quantile curve based bivariate hazard rate obtained after interchanging $X$ and $Y$ given in (8) uniquely determines the underlying quantile curve through
$$Q_{Y}(v)=\int_{0}^{v}{\frac{dz}{(1-z)H_{1}(z)}}$$
and
$$\psi(P_{Y})=\int_{0}^{P_{Y}}{\frac{dz}{(1-z)H_{2}(z)}}.$$
Similarly, the bivariate mean residual life given in (9) determines the quantile curve through
$$Q_{Y}(v)=\mu_{Y}-M_{1}(v)+\int_{0}^{v}\frac{M_1(z)}{1-z}dz$$
 and
 $$\psi(P_{Y})=\mu_{X_{Y}}-M_{2}(P_{Y})+\int_{0}^{P_{Y}}\frac{M_2(z)}{1-z}dz,$$
 where $\mu_{Y}$ and $\mu_{X_{Y}}$ are the means of the random variables $Y$ and $X_Y$, respectively. \end{remark}
 \begin{theorem}
   For the bivariate random vector \textbf{X}, the quantile curves based bivariate hazard rate and bivariate  mean residual life function are related through the relationship
  \begin{equation}\label{eq13}
    (1-u)m_1(u)=\int_{u}^{1}\frac{dz}{h_1(z)}
  \end{equation}
   and
  \begin{equation}\label{eq14}
   (1-P_X)m_2(P_X)=\int_{P_X}^{1}\frac{dz}{h_2(z)}.
  \end{equation}
 \end{theorem}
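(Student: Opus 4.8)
The plan is to verify each identity by starting from the right-hand side and collapsing it to the left-hand side via a single integration by parts. First I would invoke the definition of the hazard component in (3) to rewrite the integrand, since $h_1(z)=\frac{1}{(1-z)Q'_X(z)}$ gives $1/h_1(z)=(1-z)Q'_X(z)$, so that
\begin{equation*}
\int_u^1 \frac{dz}{h_1(z)} = \int_u^1 (1-z)\,Q'_X(z)\,dz.
\end{equation*}
Integrating by parts with $f(z)=1-z$ and $dg=Q'_X(z)\,dz$ produces the boundary term $\bigl[(1-z)Q_X(z)\bigr]_u^1$ together with $\int_u^1 Q_X(z)\,dz$.

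The next step is to evaluate the boundary contribution. At the upper limit the factor $(1-z)$ annihilates the term, while at the lower limit it contributes $-(1-u)Q_X(u)$. Collecting the pieces yields
\begin{equation*}
\int_u^1 \frac{dz}{h_1(z)} = \int_u^1 Q_X(z)\,dz - (1-u)Q_X(u),
\end{equation*}
and multiplying the definition of $m_1$ in (6) by $(1-u)$ shows the right-hand side is precisely $(1-u)m_1(u)$. This establishes (\ref{eq13}). For (\ref{eq14}) I would repeat the identical computation verbatim with $\phi$ in place of $Q_X$ and $P_X$ in place of $u$, using (4) to write $1/h_2(z)=(1-z)\phi'(z)$ and the definition (7) of $m_2$; no new idea is needed.

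The one point demanding care, and the only genuinely substantive step, is the vanishing of the boundary term at $z=1$, namely that $(1-z)Q_X(z)\to 0$ (and likewise $(1-z)\phi(z)\to 0$). Since $Q_X$ is nondecreasing and, for $m_1(u)$ to be finite, integrable near $1$, monotonicity gives the bound $(1-z)Q_X(z)\le \int_z^1 Q_X(w)\,dw$, whose right side tends to $0$ as $z\to 1$; the finiteness of the underlying means $\mu_X$ and $\mu_{Y_X}$ is in any case already implicit in the existence of the mean residual life functions defined in (6) and (7). I expect this integrability check to absorb essentially all of the difficulty, the remainder being the routine integration by parts sketched above.
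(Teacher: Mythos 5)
Your proposal is correct and follows essentially the same route as the paper: the paper's proof substitutes the hazard definition (4) into the rewritten form (12) of $m_2$ from the proof of Theorem 2, and that rewriting encodes exactly the integration by parts you perform explicitly. The only difference is one of care: you justify the vanishing of the boundary term $(1-z)Q_X(z)\to 0$ as $z\to 1$ via monotonicity and integrability, a point the paper leaves entirely implicit (its printed proof is in fact truncated mid-sentence for the case of equation (\ref{eq13})), so your version fills a genuine, if minor, gap in the exposition.
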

 \begin{proof}
   Substituting (4) in (12), we have the relationship given in (\ref{eq14}). The expression given (\ref{eq13}) can be obtained from
 \end{proof}
\section{Bivariate reliability concepts in reversed time}
\par Reversed hazard rate finds applications mainly in estimating the survival function of left censored data.  Gupta et al. (1998) proposed the proportional reversed hazard model. The concept of stochastic ordering based on reversed hazard rate is very popular in reliability theory (Finkelstein (2002), Nanda et al. (2003)).  Reversed hazard rate in bivariate set up is first introduced by Roy (2002). Here we define bivariate reversed hazard rate as a vector using quantile curves. The folowing definitions are for the direction $\varepsilon _{ -  - } $.

\begin{definition}Quantile curves based bivariate reversed hazard rate can be defined as the vector
\begin{equation*}
  \underset{\raise0.3em\hbox{$\smash{\scriptscriptstyle-}$}}{r} _{\varepsilon _{ -  - } } (u,P_{X}) = \left( {r_1 (u),r_2 (P_{X})} \right),
\end{equation*}
where
\begin{equation*}
  r_1 (u)=\frac{1}{u Q'_{X}(u)}\quad \textit{
and}\quad r_2 (P_{X})=\frac{1}{P_{X}\phi'(P_{X})}.
\end{equation*}
\end{definition}
\noindent The  $\underset{\raise0.3em\hbox{$\smash{\scriptscriptstyle-}$}}{r} _{\varepsilon _{ -  - } } (u,P_{X})$ uniquely determines the underlying quantile curve as
$$ Q_{\textbf{X}} (p,\varepsilon _{ - \, - } ) = \left( Q_{X} (u),\phi(P_{X}) \right),$$
where
\begin{equation*}
  Q_{X}(u)=\int_{0}^{u}{\frac{dz}{zh_{1}(z)}}\quad
and \quad
\phi(P_{X})=\int_{0}^{P_{X}}{\frac{dz}{zh_{2}(z)}}.
\end{equation*}
\par Nair and Asha (2008) defined the reversed mean residual life in bivariate set up as a vector reversed residual lives. The quantile curve based bivariate revered mean residual life is defined as
$$
\underset{\raise0.3em\hbox{$\smash{\scriptscriptstyle-}$}}{\eta} _{\varepsilon _{ -  - } } (u,P_{X}) = \left( {\eta_1 (u),\eta_2 (P_{X})} \right),
$$
where
$$
\eta_1 (u)=  Q_{X} (u)-\frac{1}{{ u}}\int_0^{u} {Q_{X} (z)dz }
$$
and
$$
\eta_2 (P_{X})=  \phi (P_{X})-\frac{1}{{P_{X}}}\int_{0}^{P_{X}} {\phi (z)dz }.
$$
Therefore $\underset{\raise0.3em\hbox{$\smash{\scriptscriptstyle-}$}}{\eta} _{\varepsilon _{ -  - } } (u,P_{X})$ gives the quantile curve as
$$ Q_{\textbf{X}} (p,\varepsilon _{ - \, - } ) = \left( Q_{X} (u),\phi(P_{X}) \right),$$
where
\begin{equation}
Q_{X}(u)=\eta_{1}(u)+\int_{0}^{u}\frac{\eta_1(z)}{z}dz
\end{equation}
 and
 \begin{equation}
\phi(P_{X})=\eta_{2}(P_{X})+\int_{0}^{P_{X}}\frac{\eta_2(z)}{z}dz.
\end{equation}
Unlike quantile curve  based bivariate mean residual life, means of the conditional random variables $Y_{X}$ or $X_{Y}$ are  not necessary for finding the quantile curve from quantile curve based bivariate  reversed mean residual life.\\

\section{Conclusions}
In this paper, we proposed a theoretical framework for extending the univariate quantile based reliability concepts to bivariate set up  in terms of quantile curves (level curves). We proposed quantile curves based bivariate hazard rate and bivariate mean residual life function and studied their uniqueness properties to determine the underlying quantile curve. A relationship between quantile curves based bivariate hazard rate and bivariate mean residual life function was also derived. We study the quantile curves based bivariate reliability concepts in reversed time setup as well. In essence, the work done in this paper can be considered as a theoretical foundation for developing reliability concepts in bivariate setup based on quantile functions of conditional distributions. \\

\onehalfspace
\textbf{References}

\begin{enumerate}
\item{}  Av{\'e}rous, J. and Meste, M. (1997), Median balls: an extension of the interquantile intervals to multivariate distributions, {\em Journal of Multivariate Analysis}, 63, 222-241.

\item{} Belzunce, F., Casta{\~n}o, A., Olvera-Cervantes, A. and Su{\'a}rez-Llorens, A. (2007), Quantile curves and dependence structure for bivariate distributions, {\em Computational Statistics and Data Analysis},
51,  5112-5129.


\item{} Fern{\'a}ndez-Ponce, J. M. and Su{\'a}rez-Llorens, A. (2003), A multivariate dispersion ordering based on quantile more widely separated, {\em Journal of Multivariate Analysis}, 85,  40-53.
\item{}  Finkelstein, M. S. (2002), On the reversed hazard rate, {\em Reliability Engineering and System Safety}, 78, 71–75.
\item Gilchrist, W. (2000), {\em Statistical modeling  with quantile functions}, CRC Press, Florida.
\item{} Gupta, R. C., Gupta, R. D. and Gupta, P. L. (1998),  Modeling failure time data by Lehman alternatives, {\em Communication in Statistics-Theory and  Methods}, 27, 887–904.

\item{} Koshevoy, G. and Mosler, K. (1997a), Zonoid trimming for multivariate distributions, {\em Annals of Statistics}, 25,  1998-2017.

\item{} Kulkarni, H. V. and Rattihalli, R. N. (1996), On Characterization of Bivariate Mean Residual Life Function, {\em IEEE Transactions on Reliability}, 38, 362-364.

\item{} Mass{\'e}, J. C. and Theodorescu, R. (1994), Halfplane trimming for bivariate distributions, {\em Journal of Multivariate Analysis}, 48, 188-202.
\item{} Mosler, K.  (2002), {\em Multivariate dispersion, central regions and depth: the lift zonoid approach},  Springer-Verlag, New York.
\item{} Nair, N. U. and  Asha, G. (2008), Some characterizations based on bivariate reversed mean residual life, {\em  ProbStat Forum}, 1, 1–14.
\item{}  Nair, K. R. M. and Nair, N. U. (1989), Bivariate Mean Residual Life, {\em IEEE Transactions on Reliability}, 38, 362-364.
\item{} Nair, N. U., Sankaran, P. G. and Balakrishnan, N. (2013), {\em Quantile  based reliability analysis}, Birkh{\"a}user, Basel.
\item{} Nanda, A. K., Singh, H., Neeraj, M. and Prasanta, P. (2003), Reliability properties of reversed residual lifetime, {\em Communication in Statistics-Theory and methods}, 32, 2031–2042.
\item{} Nolan, D. (1992), Asymptotics for multivariate trimming, {\em Stochastic processes and their applications}, 42, 157-169.
\item{} Roy, D. (2002), A characterization of model approach for generating bivariate life distributions
using reversed hazard rates, {\em Journal of the Japan statistical society}, 32, 239–245.


\item{} Sreelakshmi N. (2017), An introduction to copula based bivariate reliability concepts, {\em Communication in Statistics-Theory and methods}, accepted. \\
 DOI: 10.1080/ 03610926.2017.1316396.

\item{} Tukey, J. W. (1977), {\em Exploratory data analysis}, Addison-Wesley,  Princeton.
\end{enumerate}

\end{document}